\documentclass[aps,pra,10pt,superscriptaddress,floatfix,twocolumn,nofootinbib]{revtex4-2}
\usepackage{graphicx}
\usepackage{amsmath,amsthm,amssymb,dsfont,float,placeins,enumitem}
\usepackage{mdframed}
\usepackage{orcidlink}
\usepackage{appendix}
\newcommand{\ket}[1]{\left| #1 \right\rangle}
\newcommand{\bra}[1]{\left\langle #1 \right|}

\usepackage{hyperref}
\usepackage[authormarkuptext=name,commentmarkup=uwave]{changes}
\definechangesauthor[name={JRH},color=green!70!black]{jonte}

\newtheorem{theorem}{Theorem}
\definechangesauthor[name={EB},color=blue!70!black]{Enrico}

\theoremstyle{definition} 
\newtheorem{definition}{Definition}[section]

\newcommand{\id}{{\mathds{1}}}

\begin{document}

\title{Classical Limit: Dissipation of Spekkens' Generalised Contextuality under Decoherence}
\author{Enrico Bozzetto}
\affiliation{DET, Politecnico di Torino, Corso Duca degli Abruzzi, 24, 10129 Torino, Italy}
\affiliation{Quantum Group, School of Computing, Newcastle University, 1 Science Square, Newcastle upon Tyne, NE4 5TG, UK}
\author{Jonte R. Hance\,\orcidlink{0000-0001-8587-7618}}
\email{jonte.hance@newcastle.ac.uk}
\affiliation{Quantum Group, School of Computing, Newcastle University, 1 Science Square, Newcastle upon Tyne, NE4 5TG, UK}
\affiliation{Quantum Engineering Technology Laboratories, Department of Electrical and Electronic Engineering, University of Bristol, Woodland Road, Bristol, BS8 1US, UK}

\begin{abstract}
Contextuality is considered as one of the most distinctive features of nonclassical systems. Here, we show that a Spekkens contextual system (which previous work has shown is a necessary condition for nonclassicality) formed of an odd-dimensional stabiliser system plus a magic state becomes noncontextual (a sufficient condition for classicality) under the action of a depolarising channel after a certain decoherence threshold. We show also that some quasiprobability representations are more effective than others in witnessing this transition from contextuality to noncontextuality. Given previous work has shown that magic states and Spekkens contextuality are both necessary for universal quantum computation, this result helps us understand the relationship between decoherence, Spekkens' generalised contextuality, and quantum advantage.
\end{abstract}

\maketitle
\section{Introduction}
Contextuality is one of the most peculiar features of quantum mechanics. This concept was first formalised by Bell, Kochen and Specker~\cite{KS1967} and then (at least nominally) generalised by Spekkens~\cite{Spekkens2005GenContext}. Recently, it has proven to be one of the most promising candidates for a witness of potential quantum advantage~\cite{Raussendorf2013, Howard2014, shahandeh2021quantumcomputationaladvantage, Giordani2023Experimental, Flatt2026QAdvantage}. Moreover, Spekkens generalised contextuality has been shown to be very effective in discerning between whether a system obeys certain intuitive classical assumptions: Leibnizianity (operations with the same operational statistics having the same underlying ontology), linearity (the expectation value of the sum of some variables being the sum of the expectation values of those variables), and diagram-preservation (the structure of an operational representation of a scenario preserving the structure of the underlying physical processes in that scenario)~\cite{bozzetto2026warringcontextualitiesprovably, zhang2025quantifierswitnessesnonclassicalitymeasurements, zhang2026reassessingboundaryclassicalnonclassical, Schmid2024structuretheorem}. At the same time, Spekkens noncontextuality has been shown a fair measure of classicality, both from its own merits, and since it has been shown to imply both Kochen-Specker noncontextuality and Bell-locality~\cite{bozzetto2026warringcontextualitiesprovably}, making it at least a sufficient criterion for these other intuitions towards classicality.
However, most analyses of Spekkens contextuality have been performed in the abstract.

In this paper, we apply this criterion to a real scenario. In particular, we consider a system which is described by an odd-dimensional stabiliser subtheory plus a magic state.
The state of the system is
\begin{equation}
    \rho=\rho_{stab}\otimes\rho_{m}
\end{equation}
where $\rho_{stab}$ represents the stabiliser subtheory, while $\rho_m$ the magic state.

We consider the whole system undergoing some some decoherence, which progressively transforms the magic state into a maximally mixed state. While systems representable using the combination of an odd-dimensional stabiliser subtheory and a magic state are both Spekkens contextual and universal for quantum computation, odd-dimensional maximally-mixed states are instead Spekkens noncontextual. Moreover every other state, except for the magic one, in the stabilizer subtheory in Spekkens noncontextual and for this reason we consider the decoherence as affecting only the magic state. We develop a way to calculate where the transition between Spekkens contextuality and noncontextuality happens, using quasiprobabilities representations of quantum states and applying the structure theorem of Ref.~\cite{Schmid2024structuretheorem, wagner2025structuretheoremcomplexvaluedquasiprobability}. After this, we restrict our analysis to two specific quasiprobability representations, showing that the Kirkwood-Dirac representation~\cite{Kirkwood1933KDDistr, Dirac1945KDDistr} is more effective at witnessing the transition between Spekkens contextuality and Spekkens noncontextuality. Theoretically, this allows us to compare the relative value of negativity of the different representations for witnessing the nonclassicality (by Spekkens definition) or classicality (by Spekkens definition, but implying both Kochen and Specker's and Bell's definition) of a system. Practically, this allows us to understand the amount of decoherence that an odd-dimensional magic state can undergo before losing its contextuality (and so any nonclassicality), and how this relates to it losing any possibility of gaining a quantum advantage.

This paper is laid out as follows. In Sec.~\ref{sec: gen context} we briefly describe Spekkens generalised contextuality. In Sec.~\ref{sec: phase space representations} we define the concept of quasiprobability representation of a quantum system, and then we present the two most famous quasiprobability representations of quantum mechanics: the Wigner and Kirkwood-Dirac representations. In Sec.~\ref{sec: structure theorem}, we present a complexification of Ref.~\cite{Schmid2024structuretheorem}'s structure theorem for (non)contextual theories, and a generalisation for infinite dimensional systems (similar work was done in Ref.~\cite{wagner2025structuretheoremcomplexvaluedquasiprobability}). 
In Sec.~\ref{sec: stab subtheory}, we briefly describe stabiliser subtheories of quantum mechanics, and report some important theorems from Ref.~\cite{Schmid2022StabSub}. In Sec.~\ref{sec: dissipation of contextuality under decoh}, we derive our main result: a method to calculate the transition from contextuality to noncontextuality for a odd dimensional magic state which undergoes some decoherence. After that, in Sec.~\ref{sec: restriction to W and KD}, we restrict the analysis to consider only Wigner and Kirkwood-Dirac representation, and show that the latter better witnesses the transition of the magic state from contextuality to noncontextuality, answering an open question of Ref.~\cite{Schmid2022StabSub}. Finally, in Sec.~\ref{sec: conslusions} we briefly summarise the work, and discuss potential implications and future directions.

\section{Generalised contextuality}
\label{sec: gen context}

The notion of generalised contextuality was introduced by Spekkens in 2005 \cite{Spekkens2005GenContext}. This framework generalises the Kochen-Specker~\cite{KS1967} definition of contextuality by shifting from a logic of projectors to an operational representation of experiments. In this model, the primitive elements are experimental procedures: preparations ($P$), transformations ($T$), and measurements ($M$). Spekkens contextuality is defined on the ontological model of an operational theory.

An ontological model of an operational theory is defined as a map from the operational procedures of a scenario (preparations, transformations, measurements) to a deeper, underlying reality, whose attributes are defined regardless of what anyone knows about them.
In particular an ontological model is defined by three primary elements: a space of ontic states $\lambda \in\Lambda$ which contains all possible states the system can be in, a function that assigns each quantum state $\rho$ to a probability measure $\mu_\rho$ over this ontic state space, and a function that assigns each quantum observable to a Markov kernel for each ontic state in this state space.\footnote{A Markov kernel from $(X,\Sigma_X)$ to $(Y,\Sigma_Y)$ is a map that assigns to every $x \in X$ a probability measure $K(x,\cdot)$ on $(Y,\Sigma_Y)$ such that for every measurable set $A \in \Sigma_Y$, the function $x \mapsto K(x,A)$ is $\Sigma_X$-measurable.} The necessary condition for the validity of the ontological model is that the model must reproduce the same statistical predictions as the operational theory we are considering \cite{tezzin2025equivalence}.
The probability of an outcome is therefore given by the law of total probability:
\begin{equation}
    p(k|P,T,M) = \int_\Lambda d\lambda d\lambda'\, \mu_P(\lambda) \Gamma_T(\lambda',\lambda) \xi_{M,k}(\lambda')
    \label{equation ontological procedures representation}
\end{equation}
where $\mu_P(\lambda):\Lambda\rightarrow[0,1]$ is the probability distribution associated with the preparation $P$, $\Gamma_T(\lambda',\lambda):\Lambda\times \Lambda\rightarrow[0,1]$ is the transformation matrix which represents the probability of moving from the ontic state $\lambda$ to the state $\lambda'$ and $\xi_{M,k}(\lambda'):\Lambda\rightarrow[0,1]$ is the probability of obtaining the outcome $k$ performing the measurement $M$, given the ontic state $\lambda'$.

Central to the definition of Spekkens contextuality is the concept of Operational Equivalence. Two preparation procedures $P$ and $P'$ are considered (operationally) equivalent if they yield identical statistics for all possible measurements. In the ontological model this can be expressed as
\begin{equation}
\label{eq: prep equivalence}
\begin{aligned}
    P\sim P' \leftrightarrow    & \int_\Lambda \xi_{M,k}(\lambda)\,\mu_{P}(\lambda)d\lambda
    =\\&
    \int_\Lambda \xi_{M,k}(\lambda)\,\mu_{P'}(\lambda)d\lambda
    \quad \forall\,M,k
\end{aligned}
\end{equation}
Similarly, two measurements $M$ and $M'$ are equivalent if they yield the same statistics for all possible preparations:
\begin{equation}
\label{eq: meas equivalence}
\begin{aligned}
    M\sim M' \leftrightarrow     &\int_\Lambda \xi_{M,k}(\lambda)\,\mu_P(\lambda)d\lambda
    =\\&
    \int_\Lambda \xi_{M',k}(\lambda)\,\mu_P(\lambda)d\lambda
    \quad \forall\,P, k
    \end{aligned}
\end{equation}
Finally, two transformations are equivalent if they yield the same statistics for all possible preparations and measurements:
\begin{equation}
\label{eq: transf equivalence}
\begin{aligned}
    &T\sim T' \leftrightarrow\\
    &\int_\Lambda d\lambda d\lambda'\, \mu_P(\lambda) \Gamma_{T}(\lambda',\lambda) \xi_{M,k}(\lambda')=\\&\int_\Lambda d\lambda d\lambda'\, \mu_P(\lambda) \Gamma_{T'}(\lambda',\lambda)\xi_{M,k}(\lambda')\quad \forall P,M,k
\end{aligned}
\end{equation}

It is possible to define an equivalence class of procedures as a set of equivalent procedures and a context as the set of features that are not specified by specifying the equivalence class. Hence an ontological model is defined noncontextual if every experimental procedure depends only on its equivalence class, and not on its context. It is possible to rephrase the previous statement to the following equivalent one: a model is defined as noncontextual if operationally equivalent procedures have identical ontological representations. This means that
\begin{equation}
\label{eq: Spekk NC model}
\begin{split}
    &P\sim P'\Rightarrow\mu_P=\mu_{P'},\\& T\sim T'\Rightarrow \Gamma_T=\Gamma_{T'},\\& M\sim M'\Rightarrow\xi_{M,k}=\xi_{M',k}
    \end{split}
\end{equation}
In particular a system is defined as (Spekkens) preparation contextual if the first condition does not hold, (Spekkens) transformation contextual if the second condition does not hold and (Spekkens) measurement contextual if the third condition does not hold. Moreover a system is defined as (Spekkens) universally contextual if all the three previous conditions do not hold.

\section{Quasiprobability representations of quantum mechanics}
\label{sec: phase space representations}
Many formulations of quantum mechanics, equivalent to the original one, can be done using quasiprobabilities in phase space~\cite{Wigner1932,Kirkwood1933KDDistr, Dirac1945KDDistr, Harriman1993HusiniQDistr}.

We consider the usual representation of quantum mechanics of a quantum system. The main elements to describe it are a Hilbert space $\mathcal{H}$ which contain the states and a group which contains its dynamical symmetries. A quasiprobability representation of quantum states is a linear map from quantum states to normalised real functions on a certain set. A general construction of a phase representation of quantum mechanics, given a Hilbert space with its group of symmetries is given in Ref.~\cite{Brif1999PhaseSpace}. Here we will describe briefly the Wigner~\cite{Wigner1932} and the Kirkwood-Dirac~\cite{Kirkwood1933KDDistr, Dirac1945KDDistr} representations, which are two of the most general phase-space distributions of quantum mechanics.

\subsection{Wigner function}
The most famous example of a quasi-probability representation of quantum theory is the Wigner phase space representation~\cite{Wigner1932}. This representation is valid for a quantum system in an infinite-dimensional Hilbert space, for example a particle moving on a line. This representation associates to each density operator $\rho$ a probability distribution $W_\rho(q,p)$. This probability distribution is defined as follows
\begin{equation}
\label{eq: Wigner function}
    W_\rho(q,p)=\frac{1}{(2\pi)^2}\int\int_{\mathbb{R}^2}\text{Tr}(e^{i(\xi Q+\eta P)}\rho)e^{-i(\xi q+\eta p)}d\xi d\eta
\end{equation}
where $Q$ is the position operator and $P$ the momentum one.
The function has the following properties.
\begin{enumerate}
    \item For all $\rho$, $W_\rho(q,p)$ is real.
    \item For all $\rho_1$, $\rho_2$,
    \begin{equation}
        \text{Tr}(\rho_1\rho_2)=2\pi\int_{\mathbb{R}^2}dqdp\ W_{\rho_1}(q,p)W_{\rho_2}(q,p)
    \end{equation}
    \item For all $\rho$, integrating $W_\rho$ along the line $aq+ bp= c$ in phase space yields the probability that a measurement of the observable $aQ+ bP$ has the result $c$.
\end{enumerate}
There are more general quasi-probability distributions for an infinite dimensional Hilbert space's states, which simply generalise the Wigner distribution introducing a function $f(\xi,\eta)$ in Eq. \ref{eq: Wigner function}. 

Wigner function can be positive or negative. In particular, Wigner in Ref.~\cite{Wigner1932} showed that the Wigner function of a quantum state cannot be non-negative everywhere. 

\subsection{Kirkwood-Dirac distribution}

One of the most general quasiprobability representation of quantum theory is the Kirkwood-Dirac (KD) distribution~\cite{Kirkwood1933KDDistr,Dirac1945KDDistr}. The KD distribution $\varrho(\rho)$ for a given state $\rho\in \mathcal{H}$ with respect to $\ket{a}\in\text{Spec}(\hat{A})$, $\ket{b}\in\text{Spec}(\hat{B})$ (where $\hat{A}$ and $\hat{B}$ are Hermitian operators and Spec representing the eigenspectrum) is given by
\begin{equation}
\varrho(a,b)=\text{Tr}(\rho\ket{a}\bra{a}\ket{b}\bra{b})
\end{equation}

If we consider the eigenbases $\{\ket{a_i}\}$ and $\{\ket{b_j}\}$ of $\hat{A}$ and $\hat{B}$ respectively, we can also cast the distribution in a $d\times d$-dimensional matrix with entries
\begin{equation}
\label{eq: first KD distr}
    \varrho_{i,j}(\rho)\equiv \langle b_j|a_i\rangle\langle a_i|\rho|b_j\rangle
\end{equation}

The KD distribution satisfies several of Kolmogorov's axioms~\cite{Kolmogorov1950foundations} for joint probability distributions:
\begin{equation}
\begin{split}
    &\sum_{i,j}\varrho_{i,j}(\rho)=1,\\
    &\sum_j\varrho_{i,j}(\rho)=\langle a_i|\rho|a_i\rangle ,\\
    &\sum_i\varrho_{i,j}(\rho)=\langle b_j|\rho|b_j\rangle
\end{split}
\end{equation}
However, the Kirkwood-Dirac distribution is not in general a joint probability distribution because it can assume negative or non-real values. Despite this, as shown above, the marginals of $\varrho(\rho)$ are instead probability distributions and they give the correct quantum mechanical probabilities.

The distribution defined in Eq.~\ref{eq: first KD distr} can be extended to consider more than two measurement operators. We can consider $k$ non-degenerate observables $\hat{A}^{(l)}=\sum_{i_l=1}^da_{i_l}^{(l)}|{a_{i_l}^{(l)}}\rangle\langle{a_{i_l}^{(l)}}|$, where $l=1,...,k$. The Kirkwood-Dirac distribution for different sets of results for these $k$ sequential measurements is then
\begin{equation}
    \varrho_{i_1,...,i_k}(\rho)=\langle a_{i_k}^{(k)}|a_{i_{k-1}}^{(k-1)}\rangle\langle a_{i_{k-1}}^{(k-1)}|a_{i_{k-2}}^{(k-2)}\rangle\cdot\cdot\cdot\langle a_{i_1}^{(1)}|\rho|a_{i_k}^{(k)}\rangle
\end{equation}
Moreover, the distribution can be extended to consider not only projective measurements but also positive operator-valued measures (POVMs). We consider $k$ POVMs $\mathcal{M}^{(l)}=\{\mathcal{M}_{i_l}^{(l)}\}$ where $l=1,...,k$. The Kirkwood-Dirac distribution in this case is then
\begin{equation}
    \varrho_{i_1,...,i_k}(\rho)=\text{Tr}(\hat{M}_{i_k}^{(k)}\hat{M}_{i_{k-1}}^{(k-1)}\cdot\cdot\cdot\hat{M}_{i_1}^{(1)}\rho)
\end{equation}
This is the most general form of the Kirkwood-Dirac distribution. 
 
Negativity in the Kirkwood-Dirac distribution is defined via
\begin{equation}
    \mathcal{N}(\varrho(\rho))=1-\sum_{i_1,...,i_k}|\varrho_{i_1,...,i_k}(\rho)|
\end{equation}
where $ \mathcal{N}(\varrho(\rho))=0$ if and only if the Kirkwood-Dirac distribution is a classical joint probability distribution.

In Ref.~\cite{Schmidt2024KDSpekkensC}, the authors proposed an extension of the Kirkwood-Dirac distribution, which includes not only quantum states but any arbitrary quantum process, i.e. states, channels, measurements and so on. To do this they use the frame representation~\cite{Christensen2003Frames}, which unifies many of the quasi-probability representations of quantum mechanics.

A frame is a generalisation of a basis, since also linearly dependent and so redundant vectors are considered. The formal definition is the following one
\begin{definition}
    Consider a set of vectors $\{\phi_k\}_{k\in \mathbb{N}}\in \mathcal{H}$ and that there exist constants $a,b> 0$. $\{\phi_k\}$ is a frame in $\mathcal{H}$ if
    \begin{equation}
        a||v||^2\le \sum_{k\in \mathbb{N}}|\langle v,\phi_k\rangle|^2\le b||v||^2,\quad \forall v\in \mathcal{H}
    \end{equation}
\end{definition}

We consider a finite dimensional Hilbert space $\mathcal{H}_A$ associated to a single quantum system $A$ and two bases $\{\ket{a_i}\}$,$\{\ket{a'_{i'}}\}$ with $\langle a_i|a_{i'}\rangle\neq 0$ for all $i,i'$. We define a basis $F=\{F_{i,i'}\}$ for the space of bounded linear operators as 
\begin{equation}
\label{eq: frame KD}
    F_{i,i'}=|{a'_{i'}}\rangle\langle{a'_{i'}}|\ |{a_{i}}\rangle\langle{a_{i}}|
\end{equation}
Such a basis constitutes a frame. We can define then the dual frame $D=\{D_{i,i'}\}$ as the unique basis of operators satisfying 
\begin{equation}
    \text{Tr}(F_{j,j'}D_{k,k'})=\delta_{j,k}\delta_{j',k'}
\end{equation}
which is equivalent of defining the element $i,i'$ as
\begin{equation}
\label{eq: dual frame KD}
    D_{i,i'}=\frac{|a_i\rangle\langle a'_{i'}|}{\langle a'_{i'}|a_i\rangle}
\end{equation}
This type of operators is well known in the literature. For example in Ref.~\cite{Hance2023CCQCC}, it is shown that this type of operator allows to represent weak values of some particular polarisation states.
The frame and the dual frame satisfy the following properties
\begin{equation}
    \text{Tr}(D_{i,i'})=1
\end{equation}
and 
\begin{equation}
    \text{Tr}(F_{i,i'})=|\langle a'_{i'}|a_i\rangle|^2
\end{equation}
Using this approach it is possible to extend the Kirkwood-Dirac distribution to describe arbitrary quantum processes. A state $\rho$ will be represented as 
\begin{equation}
    \mu(i,i'|\rho)=\text{Tr}(F_{i,i'}\rho)=\langle a_i|\rho|a'_{i'}\rangle \langle a'_{i'}|a_i\rangle
\end{equation}
A POVM element $E$ is represented as
\begin{equation}
    \xi(E|i,i')=\text{Tr}(ED_{i,i'})
\end{equation}
and a channel $\mathcal{E}$ between a quantum system $A$ and a quantum system $B$ is represented as
\begin{equation}
    \Gamma(j,j'|i,i',\mathcal{E})=\text{Tr}[F_{j,j'}\mathcal{E}(D_{i,i'})]
\end{equation}
It is important to note that such representation are in general complex-valued and not bounded to the $[0,1]$ interval. However they are injective, i.e. given one representation it is possible to reconstruct the corresponding quantum process~\cite{Schmidt2024KDSpekkensC}.

The connection of this representation with generalised contextuality comes from the results of Ref.~\cite{Schmid2021, Schmid2024structuretheorem}. In these works the authors showed that a scenario is generalised noncontextual if and only if its description within standard quantum theory admits of an ontological model. With scenario here we mean a set of laboratory procedures consisting of preparations, transformations and measurements.
The link from Kirkwood-Dirac distribution to noncontextuality is then straightforward: we know that every real non-negative Kirkwood-Dirac representation constitutes an ontological model, so the existence of any real non-negative Kirkwood-Dirac representation implies noncontextuality. 

On the other hand, negativity or imaginarity of a given Kirkwood-Dirac representation does not necessarily imply the absence of a noncontextual ontological model. 
To prove a scenario cannot be represented by a Spekkens noncontextual ontological model, one must expand the family of Kirkwood-Dirac representations and prove that all viable quasi-probability distributions require negativity or imaginarity. To do this, we must consider the entire class of linear, functorial, and empirically adequate quasi-stochastic representations of quantum processes as complex-valued functions~\cite{Schmidt2024KDSpekkensC}. This involves extending the results of Ref.~\cite{Schmid2024structuretheorem} not only to real-valued functions, but to complex-valued functions.

\section{Structure theorem for (non)contextual theories}
\label{sec: structure theorem}

The aim of this section is to present a generalisation the results obtained in Ref.~\cite{Schmid2024structuretheorem} to also include complex-valued functions. This work has been done in Ref.~\cite{wagner2025structuretheoremcomplexvaluedquasiprobability}. Here we summarise this work.

The first thing we need to do is to move from real field to complex field. First of all we define \textbf{FVect}$_\mathbb{C}$ as the category of linear maps between finite dimensional complex vector spaces. After that we define \textbf{CQuasiSubStoch} as the subtheory of \textbf{FVect}$_\mathbb{C}$ where systems are restricted to vector spaces of the form $\mathbb{C}^\Lambda$ (systems are labelled by finite sets $\Lambda$) and where processes all correspond to quasi(sub)stochastic maps. A substochastic map is a mapping or matrix with non-negative entries where each row sums to at most 1, rather than exactly 1; while a quasi-substochastic map is substochastic map without the constraint of positivity. Quantum theory as a GPT can be represented as a subtheory of
\textbf{QuasiSubStoch}~\cite{Schmid2024structuretheorem}. In our case we want to include also complex-valued representations such as the Kirkwood-Dirac distribution, so we want to expand \textbf{QuasiSubStoch} to the complex field: 
\begin{definition}
We define \textbf{CQuasiSubStoch} as a monoidal category specified by the following data:
\begin{enumerate}
    \item \textbf{Systems:} Systems are labelled by finite sets $\Lambda$. Operationally, these correspond to the sample spaces of the representation (e.g., the joint spectra of the observables defining a Kirkwood-Dirac frame). We associate each set $\Lambda$ with the finite-dimensional complex vector space $\mathbb{C}^{\Lambda}$, equipped with a preferred basis.
    One can notice that while standard quantum mechanics is basis-independent, mapping it onto the quasi-stochastic categories requires fixing a computational basis to define the sample spaces $\Lambda$.
    The composition of systems $\Lambda$ and $\Lambda'$ is given by their Cartesian product $\Lambda \times \Lambda'$, which corresponds to the tensor product of their vector spaces $\mathbb{C}^{\Lambda} \otimes \mathbb{C}^{\Lambda'} \cong \mathbb{C}^{\Lambda \times \Lambda'}$.
    \item \textbf{Processes:} A process with input system $\Lambda$ and output system $\Lambda'$ is a complex-valued function:
    \begin{equation}
        f: \Lambda \times \Lambda' \rightarrow \mathbb{C} :: (\lambda, \lambda') \mapsto f(\lambda'|\lambda).
    \end{equation}
    These can be viewed as linear maps $f: \mathbb{C}^{\Lambda} \rightarrow \mathbb{C}^{\Lambda'}$  acting component-wise.
\end{enumerate}
\end{definition}

Despite being generalised to complexity, two different processes compose in the same way as in Ref.~\cite{Schmid2024structuretheorem}. For any pair of processes $f: \Lambda \times \Lambda' \rightarrow \mathbb{C}$ and $g: \Lambda' \times \Lambda'' \rightarrow \mathbb{C}$, their sequential composition $g \circ f : \Lambda \times \Lambda'' \rightarrow \mathbb{C}$ is given by matrix multiplication over $\mathbb{C}$:
\begin{equation}
    (g \circ f)(\lambda''|\lambda) := \sum_{\lambda' \in \Lambda'} g(\lambda''|\lambda') f(\lambda'|\lambda),
\end{equation}
mirroring the standard sequential composition rule.

For parallel composition, given $f: \Lambda \times \Lambda' \rightarrow \mathbb{C}$ and $g: \Lambda'' \times \Lambda''' \rightarrow \mathbb{C}$, the composite process $g \otimes f : (\Lambda'' \times \Lambda) \times (\Lambda''' \times \Lambda') \rightarrow \mathbb{C}$ is defined via the Kronecker product:
\begin{equation}
    (g \otimes f)((\lambda''', \lambda') | (\lambda'', \lambda)) := g(\lambda'''|\lambda'') f(\lambda'|\lambda).
\end{equation}
This preserves the algebraic structure of parallel composition previously given for real-valued representations.

We can define then the complex analogues to real preparations, effects and scalars. A complex state is defined as a process with no input which corresponds to a complex-valued function $s: \mathbb{C} \rightarrow \mathbb{C}^\Lambda$, representing a complex quasiprobability distribution (such as a Kirkwood-Dirac distribution).
A complex effect is a process with no output which corresponds to a complex response function $e: \mathbb{C}^\Lambda \rightarrow \mathbb{C}$. A complex scalar is a process with no input and no output which corresponds trivially to a complex number $c \in \mathbb{C}$.
In all the cases we used the trivial isomorphism $\mathbb{C}\sim \mathbb{C}^*$ where $*$ is the singleton set.

We can define at this point the generalisation of quasi-probabilistic models of a GPT, i.e. the complex probabilistic models. We define also a tomographic local GPT as a GPT whose states and processes can be characterised by local state preparations and local measurements.

\begin{definition}
Let $\widetilde{\text{Op}}$ be a tomographically local GPT. A complex quasi-probabilistic model of $\widetilde{\text{Op}}$ is a diagram-preserving map (strong monoidal functor) $\hat{\xi}_{\mathbb{C}} : \widetilde{\text{Op}} \rightarrow \mathbf{CQuasiSubStoch}$ satisfying the following conditions:
\begin{enumerate}
    \item \textbf{Deterministic effect preservation:} The unique deterministic effect for each GPT system $A$ is mapped to the deterministic effect in the representation: $\sum_{\lambda \in \Lambda_A} \hat{\xi}_{\mathbb{C}}(u_A)(\lambda) = 1$.
    \item \textbf{Empirical adequacy:} For all closed diagrams (probabilities of effects $E$ given preparations $P$), the map reproduces the operational predictions: $\hat{\xi}_{\mathbb{C}}(E \circ P) = \text{Pr}(E|P) \in [0,1] \subset \mathbb{C}$.
    \item \textbf{Convex-linearity:} The map preserves convex combinations of operational procedures, i.e. if $T_1$ is a procedure that is a mixture of $T_2$ and $T_3$ with weights $\omega$ and $1-\omega$, respectively, then it must hold that $\hat{\xi}_\mathbb{C}(T_1)=\omega \hat{\xi}_\mathbb{C} (T_2)+(1-\omega)\hat{\xi}_\mathbb{C}(T_2).$
\end{enumerate}
\end{definition}

Now we have defined our complex framework, we can generalise the main result of Ref.~\cite{Schmid2024structuretheorem} to include also complex-value functions. By the main result of Ref.~\cite{Schmid2024structuretheorem}, we refer to their \textbf{Theorem 4.1} which says that:

\textit{Any convex-linear, empirically adequate and diagram-preserving map $M:\widetilde{\text{Op}}\rightarrow\textbf{FVect}_\mathbb{R}$ where $\widetilde{\text{Op}}$ is tomographically local can be represented as $M(\tilde{T})=\chi_B\circ \tilde{T}\circ \chi_A^{-1}$ where for each system $A$, $\chi_A:A\rightarrow V_A$ is a invertible linear map within \textbf{FVect}$_\mathbb{R}$, where $V_A$ is the real finite target vector space of $M$. Moreover the $\chi_A$ are uniquely determined by the last equation.}

Let us now prove the same result for our complex framework:
\begin{theorem}
\label{theorem: complex_structure}
Let $\widetilde{\text{Op}}$ be a tomographically local GPT. Any convex-linear, empirically adequate, and diagram-preserving map $M : \widetilde{\text{Op}} \rightarrow \mathbf{FVect}_{\mathbb{C}}$ can be uniquely represented as:
\begin{equation}
    M(\tilde{T}) = \chi_B \circ \tilde{T} \circ \chi_A^{-1},
\end{equation}
where for each system $A$, $\chi_A : A_{\mathbb{C}} \rightarrow V_A$ is an invertible linear map within $\mathbf{FVect}_{\mathbb{C}}$, $A_{\mathbb{C}}$ is the complexification of the GPT vector space $A$, and $V_A$ is the target complex vector space (e.g., $\mathbb{C}^{\Lambda_A}$). 
\end{theorem}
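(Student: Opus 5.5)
The plan is to follow the proof of Theorem 4.1 of Ref.~\cite{Schmid2024structuretheorem} step by step, carrying out every linear-algebraic construction over $\mathbb{C}$. The one new ingredient is complexification. A real linear map from a real vector space into a complex vector space extends uniquely to a complex-linear map on the complexification: for $v\in A$, set $\chi_A(v+iw):=\chi_A(v)+i\chi_A(w)$. So the construction can be done first on real GPT vectors, and complex-linearity can be added at the end.

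\textbf{Step 1: linear extension.} Fix a system $A$. Each GPT state of $A$ is a process with no input, so $M$ sends it to a vector in $V_A$. Convex-linearity, together with the fact that the GPT state cone spans $A$, lets us extend this assignment to a real-linear map on $A$. I would argue this exactly as in Ref.~\cite{Schmid2024structuretheorem}: define the map on differences and positive multiples of states, and check it is well defined using convex-linearity. Complexifying gives a $\mathbb{C}$-linear map $\chi_A:A_{\mathbb{C}}\to V_A$. The same argument applied to effects gives a $\mathbb{C}$-linear map $\tilde{\chi}_A:A_{\mathbb{C}}^*\to V_A^*$.

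\textbf{Step 2: invertibility.} Empirical adequacy says $M(E)\circ M(P)=\Pr(E|P)=E\circ P$ for every state $P$ and effect $E$. By linearity this becomes $\tilde\chi_A(E)\circ\chi_A(s)=E(s)$ for all $s\in A_{\mathbb{C}}$ and $E\in A_{\mathbb{C}}^*$. It follows that $\chi_A$ is injective: if $\chi_A(s)=0$ then $E(s)=0$ for every $E$, since effects span the dual, so $s=0$.

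Surjectivity onto the target space is the delicate point. I would take $V_A$ to be the span of the image of $M$, or equivalently assume $M$ is onto its target as in the original theorem. With that choice $\chi_A$ is a linear isomorphism, and the duality relation then forces $\tilde\chi_A(E)=E\circ\chi_A^{-1}$.

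\textbf{Step 3: transformations.} Take a transformation $\tilde T:A\to B$. Diagram preservation gives $M(\tilde T)\circ M(P)=M(\tilde T\circ P)$, and by Step 1 this equals $\chi_B(\tilde T(P))$. Since the states span $A$, the two sides agree as linear maps, which gives $M(\tilde T)\circ\chi_A=\chi_B\circ\tilde T$. Composing with $\chi_A^{-1}$ yields $M(\tilde T)=\chi_B\circ\tilde T\circ\chi_A^{-1}$.

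Tomographic locality enters here. Composite systems are handled through the monoidal structure, with $\chi_{AB}=\chi_A\otimes\chi_B$, and this holds because product states span $AB$. Once that is established, the formula is consistent on parallel compositions.

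\textbf{Uniqueness.} Apply the formula to states (processes with trivial input, where $\chi_I=1$). This gives $\chi_A(s)=M(s)$ on all states. Since states span $A$ and $\chi_A$ is $\mathbb{C}$-linear on $A_{\mathbb{C}}$, this fixes $\chi_A$ completely.

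\textbf{Main obstacle.} The hardest step is the well-definedness of the linear extension in Step 1. With complex-valued targets, convex-linearity only controls real convex combinations, so it has to be checked that the real-linear extension is consistent before it is complexified. I would handle this with the standard argument: if $\sum_i c_i s_i=0$ with real $c_i$, split the sum into positive and negative parts, obtaining two convex mixtures of states that are operationally equivalent. Convex-linearity and empirical adequacy then give equal images of the two mixtures, so the extension is consistent. Complexification then follows automatically.
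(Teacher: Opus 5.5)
Your proof has a genuine gap in Step~2, at the point you flag yourself: you never prove that $\chi_A$ is surjective. Neither of your two fixes is legitimate. You cannot ``take $V_A$ to be the span of the image of $M$'', because $V_A=M(A)$ is fixed by the functor. You also cannot ``assume $M$ is onto as in the original theorem'', because Theorem 4.1 of Ref.~\cite{Schmid2024structuretheorem} derives invertibility rather than assuming it.

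Without surjectivity, Step~3 gives only $M(\tilde T)\circ\chi_A=\chi_B\circ\tilde T$. That relation leaves $M(\tilde T)$ unconstrained off $\mathrm{Im}\,\chi_A$, and ``composing with $\chi_A^{-1}$'' is not available. The hypotheses you actually use (preservation of sequential and parallel composition, convex-linearity, empirical adequacy) really are insufficient. Take $V_A=A_{\mathbb{C}}\oplus W_A$ with $W_A\neq 0$, set $W=0$ on the trivial system, and define $M(\tilde T)=\tilde T\oplus 0$. This satisfies all of them, yet the $\chi_A$ fixed by the states is only the inclusion $A_{\mathbb{C}}\hookrightarrow V_A$. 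This is exactly the semifunctor situation in the Appendix, where $\chi_A$ is merely injective.

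The missing ingredient is identity preservation, $M(\id_A)=\id_{V_A}$, which is part of being a functor and is what the paper uses here. The argument runs as follows.
\begin{itemize}
\item Write $\id_A=\sum_{i,j}r_{ij}\,\tilde P_i\circ\tilde E_j$, with $\tilde P_i$ a basis of states, $\tilde E_j$ a basis of effects, and real $r_{ij}$ obtained by inverting the matrix of pairings $\tilde E_j(\tilde P_i)$.
\item The processes $\tilde P_i\circ\tilde E_j$ are sequential composites, so they lie in the theory. Extend $M$ linearly to their span using your Step~1 argument.
\item Preservation of sequential composition then gives
\[
\id_{V_A}=M(\id_A)=\sum_{i,j}r_{ij}\,\chi_A(\tilde P_i)\,\tilde\chi_A(\tilde E_j)=\chi_A\circ\phi_A,
\qquad
\phi_A:=\sum_{i,j}r_{ij}\,\tilde P_i\,\tilde\chi_A(\tilde E_j).
\]
\end{itemize}
Hence every vector of $V_A$ lies in $\mathrm{Im}\,\chi_A$. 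Your duality relation $\tilde\chi_A(E)\circ\chi_A=E$ gives $\phi_A\circ\chi_A=\id_{A_{\mathbb{C}}}$, so $\phi_A=\chi_A^{-1}$.

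With this step inserted, the rest of your argument goes through. Your Step~3, which obtains the formula from the action on states, is then a tidier route than the paper's expansion of $\tilde T$ itself in fiducial states and effects. The real obstacle was this surjectivity step, not the well-definedness of the linear extension.
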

\begin{proof}
    The proof proceeds by structurally mirroring the real-valued case. \\
    \textbf{Step 1}: By the assumption of tomographic locality for the GPT $\widetilde{\text{Op}}$, any transformation process $\tilde{T} : A \rightarrow B$ can be decomposed using a minimal fiducial set of states $\tilde{P}_i$ and effects $\tilde{E}_j$~\cite{Schmid2024structuretheorem}. Because $M$ is diagram-preserving and convex-linear, it admits a unique linear extension to the complex space $A_{\mathbb{C}}$~\cite{Schmid2024structuretheorem}. We can express the action of $M$ on $\tilde{T}$ via this decomposition:
\begin{equation}
    M(\tilde{T}) = \sum_{i,j} r_{ij} M(\tilde{E}_j) \circ M(\tilde{P}_i) \circ \id_{V_A},
\end{equation}
where $r_{ij}$ are the complex expansion coefficients.\\
\textbf{Step 2}: Next, we analyse the action of $M$ on specific states $\tilde{P}_i$. Because $M$ uniquely extends to a linear map from the complexified GPT state space to $V_B \in \mathbf{FVect}_{\mathbb{C}}$, we can interpret $M(\tilde{P}_i)$ as a linear map $\chi_B$ acting on the state~\cite{Schmid2024structuretheorem}:
\begin{equation}
    M(\tilde{P}_i) = \chi_B(\tilde{P}_i).
\end{equation}
Similarly, the action of $M$ on effects $\tilde{E}_j$ maps linear functionals on the GPT space to linear functionals on $V_A$. We can represent this action as the adjoint of a process $\phi_A$ within $\mathbf{FVect}_{\mathbb{C}}$:
\begin{equation}
    M(\tilde{E}_j) = \tilde{E}_j \circ \phi_A.
\end{equation}
\textbf{Step 3}: Substituting these linear actions back into the tomographic expansion of $\tilde{T}$, we obtain:
\begin{equation}
    M(\tilde{T}) = \chi_B \circ \left( \sum_{i,j} r_{ij} \tilde{E}_j \circ \tilde{P}_i \right) \circ \phi_A = \chi_B \circ \tilde{T} \circ \phi_A.
\end{equation}
To determine $\phi_A$, we invoke the empirical adequacy of $M$. For any state $\tilde{P}$ followed by an effect $\tilde{E}$, the map must exactly preserve the operational probability:
\begin{equation}
    \tilde{E} \circ \tilde{P} = M(\tilde{E} \circ \tilde{P}) = M(\tilde{E}) \circ M(\tilde{P}) = \tilde{E} \circ \phi_A \circ \chi_A \circ \tilde{P}.
\end{equation}
Since this must hold for all states and effects, and because they span their respective complex spaces due to tomographic locality, it follows that $\phi_A \circ \chi_A = \id_{A_{\mathbb{C}}}$. 

Furthermore, because $M$ is diagram-preserving, it must strictly map the identity process to the identity process: $M(\id_A) = \id_{V_A}$. Using our derived relation, $M(\id_A) = \chi_A \circ \id_{A_{\mathbb{C}}} \circ \phi_A = \chi_A \circ \phi_A$. Thus, $\chi_A \circ \phi_A = \id_{V_A}$.

Because $\phi_A$ is both the left and right inverse of $\chi_A$, it is the unique inverse: $\phi_A = \chi_A^{-1}$. Therefore, the representation takes the strict canonical form $M(\tilde{T}) = \chi_B \circ \tilde{T} \circ \chi_A^{-1}$, completing the proof.
\end{proof}

This theorem demonstrates that the only theoretical freedom remaining in constructing a complex diagram-preserving map lies entirely in the choice of the linear map $\chi_A$. However, we will now demonstrate that defining a complex diagram-preserving representation is mathematically equivalent to selecting an exact (non-overcomplete) complex frame for the state space.
A complete frame is a frame such that every element in the state space can be written as a finite linear combinations of elements in the frame.
A non-overcomplete complex frame is a frame that does not contain more vectors than necessary to be complete.

Now, because the codomain of our representation is $\mathbf{ComQuasiSubStoch}$, the target vector spaces $V_A$ are restricted to the form $\mathbb{C}^{\Lambda_A}$. Consequently, these spaces come equipped with a preferred computational basis and its dual. Let us denote the preferred basis for $\mathbb{C}^{\Lambda_A}$ as $\{ | \lambda \rangle \}_{\lambda \in \Lambda_A}$ and the dual basis as $\{ \langle \lambda | \}_{\lambda \in \Lambda_A}$. This allows us to resolve the identity on $\mathbb{C}^{\Lambda_A}$ as:
\begin{equation}
    \id_{\mathbb{C}^{\Lambda_A}} = \sum_{\lambda \in \Lambda_A} |\lambda\rangle \langle \lambda|.
\end{equation}

For any GPT system $A$, the map $\chi_A : A_{\mathbb{C}} \rightarrow \mathbb{C}^{\Lambda_A}$ can be decomposed using this preferred basis. Because $\chi_A$ is a linear map, we can write it as:
\begin{equation}
    \chi_A = \sum_{\lambda \in \Lambda_A} |\lambda\rangle \tilde{D}_{\lambda}^A,
\end{equation}
where each $\tilde{D}_{\lambda}^A$ is a complex linear functional acting on the complexified GPT vector space $A_{\mathbb{C}}$. Since $\chi_A$ is invertible by Theorem \ref{theorem: complex_structure}, the set of functionals $\{ \tilde{D}_{\lambda}^A \}_{\lambda \in \Lambda_A}$ must form a basis for the dual space $A_{\mathbb{C}}^*$. 

Similarly, we can decompose the inverse map $\chi_A^{-1} : \mathbb{C}^{\Lambda_A} \rightarrow A_{\mathbb{C}}$ as:
\begin{equation}
    \chi_A^{-1} = \sum_{\lambda \in \Lambda_A} \tilde{F}_{\lambda}^A \langle \lambda|,
\end{equation}
where each $\tilde{F}_{\lambda}^A$ is an element of $A_{\mathbb{C}}$. The invertibility condition $\chi_A \circ \chi_A^{-1} = \id_{\mathbb{C}^{\Lambda_A}}$ strictly implies:
\begin{equation}
    \chi_A \circ \chi_A^{-1} = \sum_{\lambda, \lambda' \in \Lambda_A} |\lambda\rangle \tilde{D}_{\lambda}^A (\tilde{F}_{\lambda'}^A) \langle \lambda'| = \sum_{\lambda \in \Lambda_A} |\lambda\rangle \langle \lambda|.
\end{equation}
By matching terms, we obtain the biorthogonality condition:
\begin{equation} \label{eq:biorthogonal}
    \tilde{D}_{\lambda}^A (\tilde{F}_{\lambda'}^A) = \delta_{\lambda, \lambda'}.
\end{equation}
This confirms that $\{ \tilde{F}_{\lambda}^A \}_{\lambda \in \Lambda_A}$ defines a strict basis for $A_{\mathbb{C}}$, and $\{ \tilde{D}_{\lambda}^A \}_{\lambda \in \Lambda_A}$ defines its dual basis.

Now that we have set up our framework, we can easily link it to quantum theory and in particular to KD representations. 
The complexified state space $A_{\mathbb{C}}$ corresponds to the space of linear operators acting on a $d$-dimensional Hilbert space $\mathcal{H}$, denoted $\mathcal{L}(\mathcal{H})$, which has complex dimension $d^2$.

By the Riesz representation theorem~\cite{Riesz1914}, every complex linear functional $\tilde{D}_{\lambda}^A$ on $\mathcal{L}(\mathcal{H})$ can be represented via the Hilbert-Schmidt inner product with a unique operator $D_{\lambda} \in \mathcal{L}(\mathcal{H})$, such that for any state $\rho$:
\begin{equation}
    \tilde{D}_{\lambda}^A(\rho) = \text{Tr}(D_{\lambda} \rho).
\end{equation}
The basis $\{ F_{\lambda} \}$ and the dual operators $\{ D_{\lambda} \}$ constitute a frame and a dual frame for the operator space. The biorthogonality condition of Eq. \eqref{eq:biorthogonal} becomes:
\begin{equation}
    \text{Tr}(D_{\lambda} F_{\lambda'}) = \delta_{\lambda, \lambda'}.
\end{equation}

Furthermore, the map $\hat{\xi}_{\mathbb{C}}$ must preserve the unique deterministic effect. In quantum theory, the deterministic effect is the trace operation. Equating the trace in the GPT to the sum over the sample space in \textbf{CQuasiSubStoch} requires:
\begin{equation}
    \sum_{\lambda \in \Lambda_A} \text{Tr}(D_{\lambda} \rho) = \text{Tr}(\rho) \implies \sum_{\lambda \in \Lambda_A} D_{\lambda} = \id,
\end{equation}
and similarly for the frame elements, $\text{Tr}(F_{\lambda}) = 1$ for all $\lambda$. 

Now we can see the connection to KD representations. For a quantum system, a standard KD distribution is constructed by choosing two orthonormal bases, $\{|a_i\rangle\}$ and $\{|a'_{i'}\rangle\}$. The sample space is the joint set of indices $\Lambda = \{(i,i')\}$. The KD frame operators are defined by Eq.~\eqref{eq: frame KD} and the duals by Eq.~\eqref{eq: dual frame KD}.
By the definition of the duals, $\text{Tr}(D_{(i,j)} F_{(k,l)}) = \delta_{ik}\delta_{jl}$. Moreover, $\sum_{i,j} D_{(i,j)} = \id$. Therefore, the KD representation fits perfectly within this formalism as an exact complex frame.

This complexified framework allows to do what the authors of Ref.~\cite{Schmidt2024KDSpekkensC} propose at the end of their work. If one is able to show that every representation of a system in this complexified generalised framework requires negativity or complexity, one can conclude that such a system must be Spekkens contextual. The role of negativity comes from \textbf{Corollary 3.5} of Ref.~\cite{Schmid2024structuretheorem}, while the role of complexity will be demonstrated here.

First of all we need to define a generalisation of negativity of a representation to also include complexity. Since a negative representation is usually referred to as nonclassical, we will provide the following definition.

\begin{definition}
\textit{Let $\hat{\xi}_{\mathbb{C}}$ be a complex quasi-probabilistic model of a GPT} $\widetilde{\text{Op}}$. \textit{We say $\hat{\xi}_{\mathbb{C}}$ exhibits \textbf{complex non-classicality} if there exists at least one process $\tilde{T}$ in} $\widetilde{\text{Op}}$ \textit{such that its representation $\hat{\xi}_{\mathbb{C}}(\tilde{T})(\lambda) \in \mathbb{C}$ contains strictly negative real components} ($\text{Re}[\hat{\xi}_{\mathbb{C}}(\tilde{T})(\lambda)] < 0$) \textit{or non-zero imaginary components} ($\text{Im}[\hat{\xi}_{\mathbb{C}}(\tilde{T})(\lambda)] \neq 0$) \textit{for some $\lambda \in \Lambda$.}
\end{definition}
We now state our main theorem, which formally connects the concept of complex non-classicality across all exact frames for a given scenario to the inability to make a noncontextual ontological model for that scenario.

\begin{theorem}
\label{theorem:complex contextuality}
Let \textup{Op} be an operational theory and $\widetilde{\textup{Op}}$ be its associated tomographically local GPT. If every diagram-preserving complex representation $\hat{\xi}_{\mathbb{C}} : \widetilde{\textup{Op}} \rightarrow \mathbf{CQuasiSubStoch}$ exhibits complex non-classicality, then $\textup{Op}$ is Spekkens contextual (i.e., it does not admit a generalised-noncontextual ontological model).
\end{theorem}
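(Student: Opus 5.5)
I will prove the contrapositive: if Op is Spekkens noncontextual, then some diagram-preserving complex representation $\hat{\xi}_{\mathbb{C}} : \widetilde{\textup{Op}} \rightarrow \mathbf{CQuasiSubStoch}$ has all components real and non-negative, so it does not exhibit complex non-classicality. The strategy is to take the real, positive representation that already exists for noncontextual theories and embed it into the complex category without changing any values.

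\textbf{Step 1: obtain a positive real model.} By the result of Refs.~\cite{Schmid2021, Schmid2024structuretheorem} recalled above, Op is generalised noncontextual if and only if $\widetilde{\textup{Op}}$ admits an ontological model. Such a model is a convex-linear, empirically adequate, diagram-preserving map $\hat{\xi} : \widetilde{\textup{Op}} \rightarrow \mathbf{SubStoch} \subset \mathbf{QuasiSubStoch}$. Its processes are functions $\Lambda\times\Lambda'\to[0,1]$. By Corollary 3.5 of Ref.~\cite{Schmid2024structuretheorem}, this map has the form $\chi_B\circ\tilde{T}\circ\chi_A^{-1}$ with real $\chi_A$.

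\textbf{Step 2: embed it in the complex category.} Let $\iota : \mathbf{QuasiSubStoch} \rightarrow \mathbf{CQuasiSubStoch}$ be the functor that is the identity on finite sets $\Lambda$ and regards each real function $f(\lambda'|\lambda)$ as a complex function with zero imaginary part. I then check three things.
\begin{enumerate}
\item $\iota$ is a strict monoidal functor. Sequential composition is the same matrix product over $\mathbb{R}\subset\mathbb{C}$, and parallel composition is the same Kronecker product.
\item $\iota$ sends the real deterministic effect, i.e. the constant function $1$, to the complex deterministic effect.
\item $\iota$ preserves scalars, so empirical adequacy is kept: $\iota\hat{\xi}(E\circ P)=\Pr(E|P)$. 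It also preserves real convex combinations, so convex-linearity is kept.
\end{enumerate}
Hence $\hat{\xi}_{\mathbb{C}}:=\iota\circ\hat{\xi}$ is a complex quasi-probabilistic model in the sense of the definition above. Equivalently, via Theorem~\ref{theorem: complex_structure}, it is given by the complexification $\chi_A\otimes\mathbb{C}$, which is still invertible.

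\textbf{Step 3: conclude.} Every $\hat{\xi}_{\mathbb{C}}(\tilde{T})(\lambda)$ is a real number in $[0,1]$. So no component has $\text{Re}<0$ or $\text{Im}\neq 0$, and $\hat{\xi}_{\mathbb{C}}$ does not exhibit complex non-classicality. This contradicts the hypothesis that every such representation does, so Op must be Spekkens contextual.

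\textbf{Main obstacle.} The hard part is making sure that the ontological model's "diagram-preserving map" lands in the same class over which the hypothesis quantifies. Two issues arise.
\begin{enumerate}
\item Finiteness of $\Lambda$. Ref.~\cite{Schmid2024structuretheorem} guarantees finite ontic spaces in the tomographically local case, since $|\Lambda_A|$ equals the GPT dimension. I would cite that bound, so that $\Lambda$ is a finite set as $\mathbf{CQuasiSubStoch}$ requires.
\item Exact frames. The hypothesis should be read as quantifying over exact frames. The bound above means the real model corresponds to an exact real frame, and its complexification is an exact complex frame satisfying the biorthogonality condition in Eq.~\eqref{eq:biorthogonal}.
\end{enumerate}
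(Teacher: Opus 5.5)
Your proposal is correct and follows essentially the same route as the paper: assume noncontextuality, use Corollary 3.5 of Ref.~\cite{Schmid2024structuretheorem} to get a positive real model into \textbf{QuasiSubStoch}, and note that, viewed inside $\mathbf{CQuasiSubStoch}$ via $\mathbb{R}\subset\mathbb{C}$, it is a diagram-preserving complex representation with no complex non-classicality, which contradicts the hypothesis. Your explicit checks (that $\iota$ is a strict monoidal functor preserving the deterministic effect, empirical adequacy and convex-linearity, and that $|\Lambda_A|$ equals the GPT dimension so $\Lambda$ is finite) spell out what the paper calls trivial; one small point is that the paper attributes the $\chi_B\circ\tilde{T}\circ\chi_A^{-1}$ form to Theorem 4.1, not Corollary 3.5, though your argument does not need that form.
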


\begin{proof}
    We prove \textbf{Theorem~\ref{theorem:complex contextuality}} by contradiction. We suppose that $\text{Op}$ is noncontextual. Then, from \textbf{Corollary 3.5} of Ref.~\cite{Schmid2024structuretheorem} this implies that the associated GPT $\widetilde{\text{Op}}$ must admit a positive quasi-probabilistic model $\hat{\xi}_+ : \widetilde{\text{Op}} \rightarrow \mathbf{QuasiSubStoch}$ in the non-complexified  framework.
    This means that, for any process $\tilde{T} \in \widetilde{\text{Op}}$, the representation $\hat{\xi}_+(\tilde{T})$ is a strictly positive, real-valued probability distribution over the sample space $\Lambda$, meaning $\hat{\xi}_+(\tilde{T})(\lambda) \in \mathbb{R}$ and $\hat{\xi}_+(\tilde{T})(\lambda) \geq 0$ for all $\lambda \in \Lambda$.
    
    Since $\mathbb{R} \subset \mathbb{C}$, any positive real map is trivially a valid complex map where the imaginary part is zero. Therefore, $\hat{\xi}_+$ mathematically constitutes a valid diagram-preserving complex representation $\hat{\xi}_{\mathbb{C}}'$ within our generalised framework. 

    However, because $\hat{\xi}_+$ is strictly non-negative and real, the constructed complex representation $\hat{\xi}_{\mathbb{C}}'$ does not exhibit complex non-classicality; its imaginary components are zero, and its real components are non-negative for all states. 

    This directly contradicts our initial premise that every diagram-preserving complex representation of $\widetilde{\text{Op}}$ exhibits complex non-classicality. Therefore, our assumption must be false. The operational theory $\text{Op}$ cannot admit a noncontextual ontological model, concluding the proof.
\end{proof}

This theorem allows us to close the loophole highlighted by Schmid \textit{et al} in Ref.~\cite{Schmidt2024KDSpekkensC}.

This generalisation to complex-valued functions allows one to determine if a theory is contextual. If all the representations of such a theory in this generalised framework are complex-nonclassical, then one can conclude that the theory is contextual. Since every representation corresponds to an exact frame, this means that if no exact complex frame can represent the theory without negativity or imaginarity, the theory must be contextual.

After formulating this generalisation, we saw that Ref.~\cite{wagner2025structuretheoremcomplexvaluedquasiprobability} provide a similar complex structure theorem, and further generalise this approach by relaxing the condition of strict functors to include also semifunctors (functors which do not preserve the identity). In particular they prove that their representation takes the form $N(T)=\chi_B\circ C(T)\circ \phi_A$ where $\phi_A$ in this case is not strictly the inverse of $\chi_A$ but $\phi_A=\tilde{\chi}_A^{-1}\circ N(\text{id}_A)$. The usage of semifunctors does not force the target vector space to be finite-dimensional, and so their approach also applies to infinite dimensional spaces and representations, for example position/momentum Wigner functions in quantum optics. 
In Appendix \ref{semifunctors extension of structure theorem} we provide a more detailed summary of the semifunctors extension of structure theorem.

\section{Stabiliser subtheory}
\label{sec: stab subtheory}
Stabiliser subtheory is a subtheory of quantum theory, which is widely used in many fields. It is based on Clifford unitaries, which are themselves based on Weyl operators operators. If we consider a $d$-dimensional Hilbert space, single-system Weyl operators are
\begin{equation}
    W_{p,q}=Z^pX^q
\end{equation}
where for computational basis states $\ket{x}$, $X\ket{x}=\ket{x+1}$, and $Z\ket{x}=\omega^x\ket{x}$ with $\omega=\text{exp}(2\pi i/d)$ and with $p,q\in \mathbb{Z}_d$.
Clifford unitaries are defined as unitaries which — up to a phase — map Weyl operators to other Weyl operators under conjugation. The stabiliser subtheory for a single system in dimension
$d$ is defined as the set of processes which can be generated
by sequential composition of:
\begin{enumerate}[label=(\roman*)]
    \item Pure states uniquely identified by being the simultaneous eigenstates of a given set of Weyl operators,
\item Projective measurements in the
spectral decomposition of the Weyl operators, and
\item Clifford unitary super-operators on the associated
Hilbert space, as well as convex mixtures of such processes~\cite{Schmid2022StabSub}.
\end{enumerate}
In particular a stabiliser subtheory is called odd-dimensional stabiliser subtheory if the dimension of the Hilbert space is odd.

In this context, in Ref.~\cite{Schmid2022StabSub} the authors proved the following theorems:
\begin{theorem}
\label{theorem: necessity of Gross negativity}
\begin{enumerate}
   \item The unique non-negative and diagram-preserving quasiprobability representation for any stabiliser subtheory (single- or multi-particle) in odd dimensions is Gross’s representation.
\item For any stabiliser
subtheory (single- or multi-particle) in even dimensions, there is no non-negative and diagram-preserving quasi-probability representation.
\item Consider any state $\rho$ which promotes the
stabiliser subtheory to universal quantum computation.
There is no generalised noncontextual model for the
stabiliser subtheory together with $\rho$.
\item A (necessary and) sufficient condition for any unitary or pure state to promote the stabiliser subtheory to universal quantum computation is that it be negatively represented in Gross’s representation
\end{enumerate}
\end{theorem}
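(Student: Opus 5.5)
Each of the four claims reduces to the structure theorem for diagram-preserving quasiprobability representations (Theorem~4.1 of Ref.~\cite{Schmid2024structuretheorem}) combined with the symmetry of the Weyl--Clifford group. The same structure would serve in the complexified form of Theorem~\ref{theorem: complex_structure}. Throughout we use Corollary~3.5 of Ref.~\cite{Schmid2024structuretheorem}: a scenario is generalised noncontextual iff its GPT admits a non-negative diagram-preserving quasiprobability model.

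\textbf{Claim 1 (uniqueness in odd $d$).} By the structure theorem, any diagram-preserving representation of system $A$ is fixed by an invertible $\chi_A=\sum_\lambda|\lambda\rangle\tilde D_\lambda$. In other words it is an exact frame $\{F_\lambda\}$ with dual $\{D_\lambda\}$, $\mathrm{Tr}(D_\lambda F_{\lambda'})=\delta_{\lambda\lambda'}$ and $\sum_\lambda D_\lambda=\id$. Diagram preservation for multipartite systems forces $\chi_{AB}=\chi_A\otimes\chi_B$, so it suffices to handle a single $d$-level system.

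\begin{enumerate}[label=(\roman*)]
\item Non-negativity on all stabiliser states, and on the projectors of Weyl-operator measurements, requires $\mathrm{Tr}(F_\lambda\,\cdot)$ and $\mathrm{Tr}(D_\lambda\,\cdot)$ to be non-negative on every stabiliser facet.
\item The non-negative representations of Clifford unitaries must be stochastic matrices that are invertible with stochastic inverse, hence permutations of $\Lambda$. Since $|\Lambda|=d^2$, the Clifford group therefore acts on the frame by permutations.
\item Use the orbit structure. Both non-negativity and reproduction of the statistics of all $d+1$ mutually unbiased stabiliser bases must hold. Together these force each $D_\lambda$ to be a phase-point operator $A_{p,q}=\frac1d\sum_{a,b}\omega^{\ldots}W_{a,b}$, i.e.\ Gross's choice.
\end{enumerate}

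Step (iii) is the main obstacle. My first attempt would be to show that the effects $\xi(\cdot|\lambda)$ are deterministic on the Weyl measurements. That determines a noncontextual value assignment, i.e.\ a point of the discrete phase space $\mathbb{Z}_d^2$. Once each $\lambda$ is identified with a point of $\mathbb{Z}_d^2$, the frame is fixed by linearity.

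\textbf{Claim 2 (even $d$).} I would run the same argument. The permutation action and the deterministic value assignments would have to be consistent with the Weyl commutation phases. For $d=2$ (and multiqubit cases) the sign structure of the Pauli group gives a Mermin--Peres/Kochen--Specker-type parity contradiction, or a Spekkens-style preparation-contextuality proof within the stabiliser states. Either way no non-negative diagram-preserving model exists. For general even $d$, I would embed a qubit-like parity obstruction via the $\mathbb{Z}_2$ factor.

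\textbf{Claims 3 and 4.} For claim~4, take a pure state or unitary that is positively represented in Gross's representation. By the discrete Hudson theorem (pure states) and its unitary analogue, it is then a stabiliser state or a Clifford gate. Such objects are efficiently simulable, so they cannot promote universality. Conversely, a negatively represented element is non-stabiliser, and by magic-state distillation (Campbell--Anwar--Browne) it promotes universality. Claim~3 then follows by contradiction. Suppose a generalised noncontextual model existed for the stabiliser subtheory plus $\rho$. Corollary~3.5 gives a non-negative diagram-preserving representation of the enlarged theory. Its restriction to the stabiliser subtheory must, by claim~1, be Gross's, and $\chi_A$ is fixed there. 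The representation of $\rho$ is then Gross's $W_\rho$, which by claim~4 is negative. This contradicts non-negativity.
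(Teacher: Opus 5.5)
The paper gives no proof of this theorem; it is quoted from Ref.~\cite{Schmid2022StabSub}. Your sketch therefore has to stand on its own, and it has a genuine gap at exactly the step you flag.

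\textbf{Claim 1.} Outcome determinism is fine. Since $\sum_\lambda \mu(\lambda|P_{b,j})\,\xi(P_{b,k}|\lambda)=\delta_{jk}$ with non-negative terms, $\xi(\cdot|\lambda)$ is deterministic on $\mathrm{supp}\,\mu(\cdot|P_{b,j})$. Moreover $\mu(\cdot|\id/d)$ has full support: $\mu(\lambda|s)\le d\,\mu(\lambda|\id/d)$ for every stabiliser state $s$, and no $F_\lambda$ can vanish on the stabiliser states, which span the operator space. The problem is that such an assignment is a point $v\in\mathbb{Z}_d^{d+1}$ (one outcome per basis), not a point of $\mathbb{Z}_d^{2}$. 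The $d+1$ bases share no projectors, so every $v$ is a noncontextual value assignment, and each $v$ fixes a unique $D_v$ with $\mathrm{Tr}(P_{b,j}D_v)=\delta_{j,v(b)}$.

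Non-negativity and the MUB statistics do not cut this set down. Let $L=\{v_u: v_u(b)=[u,g_b],\ u\in\mathbb{Z}_d^2\}$ be the linear assignments, with $g_b$ generating direction $b$ and $[\cdot,\cdot]$ the symplectic form. For \emph{every} $v_0$, the operators $D_{v_0+v_u}$ together with $F_{v_0+v_u}=D_{v_0+v_u}/d$ form an exact frame. This frame is deterministic and non-negative on all stabiliser states and measurements, and Weyl conjugation permutes it. That gives $d^{d-1}$ distinct candidates, of which Gross's is the one with $v_0\in L$. So ``together these force each $D_\lambda$ to be a phase-point operator'' is false.

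The missing idea is to apply your step (ii) to the symplectic part of the Clifford group. Weyl covariance together with $|\Lambda|=d^2=|L|$ forces $\{v_\lambda\}$ to be a single coset $v_0+L$. The parity Clifford ($a\mapsto -a$) maps $D_v\mapsto D_{-v}$, so invariance requires $2v_0\in L$. For odd $d$ this means $v_0\in L$, which yields Gross's $A_u$. Note also that your MUB picture only covers prime $d$.

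\textbf{Claim 2.} The argument fails for a single qubit, which the statement includes. In $d=2$ there is no Kochen--Specker or Mermin--Peres argument. There is also no preparation-contextuality proof inside the stabiliser states. The four operators $(\id+v_xX+v_yY+v_zZ)/2$ with $v_xv_yv_z=+1$ (the 8-state/toy model) give a non-negative diagram-preserving representation of all stabiliser states, Pauli measurements and Pauli unitaries. The obstruction comes only from Cliffords such as $H$ or $S$, which send $v_xv_yv_z\mapsto -v_xv_yv_z$ and so swap the two Pauli orbits. By the same determinism-plus-Weyl-covariance reduction, no admissible frame is Clifford-invariant. For $d=4,8,\dots$ the $\mathbb{Z}_d$ Weyl group does not factor through qubits, so your `embedding of a $\mathbb{Z}_2$ factor' is unsupported.

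\textbf{Claim 3.} Here $\rho$ may be mixed, but your Claim 4 (via discrete Hudson) only covers pure states and unitaries. You need Veitch \emph{et al.}'s result instead. Stabiliser operations preserve Wigner non-negativity, and the non-negative set is closed and contains no non-stabiliser pure state. Hence $W_\rho\ge 0$ rules out universality, with no complexity assumption needed. With that substitution, and given Claim 1, your Claim 3 argument (restrict to Gross's representation, then linearity fixes $W_\rho$) is the right one. For even $d$, Claim 3 is immediate from Claim 2.
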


\section{Dissipation of contextuality under decoherence}
\label{sec: dissipation of contextuality under decoh}
We want now to apply these ideas to a dynamic scenario. In particular we will consider a quantum system which undergoes a continuous decoherence process. Our aim is to show that there is an exact point at which a system which is Spekkens contextual becomes noncontextual, and so (broadly) classical. This idea is based on our recent work Ref.~\cite{bozzetto2026warringcontextualitiesprovably}, where Spekkens contextuality is proposed as a signature for a system to be classical; as opposed to Kochen-Specker contextuality, which is proposed to be a signature of a nonclassical system.

To show this, let us consider an odd-dimensional stabiliser subtheory with a magic state $\mathcal{S}_d^*=\mathcal{S}_d\ \cup\ \rho_m$. It has been shown that a stabiliser subtheory plus a magic state must be contextual~\cite{Howard2014}. This means that $\mathcal{S}_d^*$ is contextual, and so nonclassical. 

To observe the transition to classicality, we subject the injected magic state to a continuous decoherence process. We model the process with a depolarising channel $\mathcal{E}_p:\mathcal{B}(\mathcal{H})\rightarrow\mathcal{B}(\mathcal{H})$ parametrised by a noise parameter $p\in[0,1]$.
The action of the channel on a generic state $\rho$ is
\begin{equation}
        \mathcal{E}_p(\rho) = (1-p)\rho + p \frac{\id}{d}
\end{equation}
where $\id/d$ is the maximally-mixed state. We can define the magic state under the action of the channel as
\begin{equation}
    \rho_M(p) = \mathcal{E}_p(\rho_m) = (1-p)\rho_m + p \frac{\id}{d}
\end{equation}
For every $p >0$, the state is mixed and so Theorem~\ref{theorem: necessity of Gross negativity} is not valid.
Since stabiliser subtheory is closed under convex mixtures of pure states, the maximally-mixed state belongs to the stabiliser subtheory. The set of all states that can be represented positively by the Gross-Wigner frame is called the stabiliser polytope. This is a convex geometric shape in the space of density matrices whose centre is occupied by the maximally-mixed state. This means that there is a finite sphere of radius $\epsilon>0$ completely surrounding the maximally-mixed state that is also entirely inside the polytope. Consequently, there must exist a critical noise threshold, $p_{crit}$, beyond which the degraded state $\rho_M(p)$ can be perfectly simulated within the non-contextual stabiliser subtheory. The objective of our framework is to rigorously determine this threshold by evaluating the representation of $\rho_M(p)$ under the complex mapping $\hat{\xi}_{\mathbb{C}} : \widetilde{\textup{Op}} \rightarrow \mathbf{CQuasiSubStoch}$. The transition contextual-noncontextual happens at the value of $p$ where it becomes mathematically possible to construct an exact complex frame that yields a strictly positive, real representation for the entire subtheory $S_d \cup \{\rho_M (p)\}$.

To be able to use the contextuality theorems, we need to formalise the representation of the subtheory within the generalised representation framework. 
Let $\widetilde{\textup{Op}}_p$ denote the tomographically local GPT corresponding to our noise-parameterised operational subtheory. A complex quasi-probabilistic representation is a semifunctor $\hat{\xi}_{\mathbb{C}} : \widetilde{\textup{Op}}_p \rightarrow \mathbf{ComQuasiSubStoch}$. We know that 
the action of $\hat{\xi}_{\mathbb{C}}$ on any transformation $\tilde{T} : A \rightarrow B$ necessarily takes the form
\begin{equation}
    \hat{\xi}_{\mathbb{C}}(\tilde{T}) = \chi_B \circ C(\tilde{T}) \circ \phi_A,
\end{equation}
where $C$ is the complexification functor, $\chi_B : B_{\mathbb{C}} \rightarrow \mathbb{C}^{\Lambda_B}$ is the analysing linear map, and $\phi_A : \mathbb{C}^{\Lambda_A} \rightarrow A_{\mathbb{C}}$ is the synthesising linear map.
For the specific case of state preparation, the synthesising map $\phi_A$ is replaced by $\phi_\id$, which acts trivially, and the representation simplifies to
\begin{equation}
    \hat{\xi}_{\mathbb{C}}(\rho) = \chi_A \circ C(\rho).
\end{equation}
Moreover, the complexification functor acts trivially on a quantum state so the representation reduces to
\begin{equation}
    \hat{\xi}_{\mathbb{C}}(\rho) = \chi_B
\end{equation}
Let $\Lambda$ be our sample space. The map $\chi_B : \mathcal{L}(\mathcal{H}) \rightarrow \mathbb{C}^{\Lambda}$ is entirely determined by a chosen frame of operators $\{F_\lambda\}_{\lambda \in \Lambda} \subset \mathcal{L}(\mathcal{H})$. Using Riesz theorem it is possible to write the action of $\chi_B(\rho)$ as
\begin{equation}
\label{eq: Riesz repr of chi_B}
    \chi_B(\rho)=\sum_{\lambda \in \Lambda} \text{Tr}(F_\lambda \rho) |\lambda\rangle
\end{equation}
The synthesising map $\phi_A$ is similarly defined by a dual frame $\{G_\lambda\}_{\lambda \in \Lambda}$, satisfying the generalised reconstruction condition $\sum_{\lambda} G_\lambda \text{Tr}(F_\lambda \rho) = \rho$.
Therefore, we can apply this structure to our partially-depolarised magic state to get
\begin{equation}
        \hat{\xi}_{\mathbb{C}}(\rho_M(p)) = \chi_B \left( (1-p)\rho_m + p \frac{\id}{d} \right).
\end{equation}
By linearity, this is equivalent to
\begin{equation}
    \hat{\xi}_{\mathbb{C}}(\rho_M(p)) = (1-p)\chi_B(\rho_m) + p \chi_B \left(\frac{\id}{d}\right),
\end{equation}
and so using Eq.~\eqref{eq: Riesz repr of chi_B} we get
\begin{equation}
        \hat{\xi}_{\mathbb{C}}(\rho_M(p))(\lambda) = (1-p)\text{Tr}(F_\lambda \rho_m) + \frac{p}{d}\text{Tr}(F_\lambda)
\end{equation}

This equation shows that whether a specific noise level $p$ is contextual is formally equivalent to whether, for noise level $p$, there exists at least one valid frame $F_\lambda$, such that the vector $\hat{\xi}_{\mathbb{C}}(\rho_M(p))$ - alongside the representations of all operations in $\mathcal{S}_d$ - contains strictly non-negative real components. 

We can formalise the search for this boundary as an optimisation problem. Let $\mathcal{F}$ be the set of all valid exact frame/dual-frame pairs $(F_\lambda, G_\lambda)$ that satisfy the generalised reconstruction condition for the complexified state space $\mathcal{L}(\mathcal{H})$. For a given pair $f \in \mathcal{F}$, let $\hat{\xi}_{\mathbb{C}}^{(f)}$ denote the corresponding semifunctor representation. We define a non-classicality penalty function, $\mathcal{W}$, which quantifies the deviation of a given complex matrix $M$ from the positive real domain $\mathbf{QuasiSubStoch}$:
\begin{equation}
    \mathcal{W}(M) = \sum_{i,j} \Big| \text{Im}(M_{ij}) \Big| + \sum_{i,j} \Big| \min\big(0, \text{Re}(M_{ij})\big) \Big|.
\end{equation}
By definition, $\mathcal{W}(M) = 0$ if and only if $M$ is a strictly non-negative real matrix. So for a fixed noise parameter $p$, the total complex non-classicality of the subtheory under a specific frame $f$ is evaluated over all tomographically complete sets of states ($\mathcal{P}$), transformations ($\mathcal{T}$), and effects ($\mathcal{E}$) in $\mathcal{S}_d \cup \{\rho_M(p)\}$:
\begin{equation}
    \Omega(p, f) = \max_{\tilde{O} \in \mathcal{P} \cup \mathcal{T} \cup \mathcal{E}} \mathcal{W}\left( \hat{\xi}_{\mathbb{C}}^{(f)}(\tilde{O}) \right)
\end{equation}

We therefore define the critical noise threshold, $p_{crit}$, as:
\begin{equation} \label{eq:p_crit_def}
    p_{crit} = \inf \left\{ p \in [0, 1] \; \Big| \; \min_{f \in \mathcal{F}} \Omega(p, f) = 0 \right\}.
\end{equation}

If one is able to solve this equation over the space of all exact frames $\mathcal{F}$, it is possible to find the smallest threshold after which the system becomes noncontextual, and so losing any possibility of providing quantum advantage. Notably, $\mathcal{F}$ has a finite dimension.

Computational techniques would be necessary to simulate all the possible exact frames for a given state in a given dimension, to be able to calculate the exact value of $p_{crit}$ - this is beyond the scope of the current work.

\section{Restriction to Wigner and Kirkwood-Dirac representations}
\label{sec: restriction to W and KD}
In this section, we specifically consider only the Wigner and Kirkwood Dirac representations, to see if there are any differences between their ability to witness when decoherence stops a odd-dimensional-stabiliser-subtheory-plus-magic-state system from being quantum computationally universal.
For the Wigner representation, the transition witness point $p_{W}$ is defined as the minimum noise level required for the discrete Wigner function of the partially-depolarised magic state to become entirely non-negative:
    \begin{equation}
        p_W = \inf \{ p \in [0,1] \mid \hat{\xi}^{(f_W)}(\rho_m(p)) \geq 0 \}.
    \end{equation}
Whereas, for the Kirkwood-Dirac representation, we define the global non-classicality witness, $\mathcal{N}(p)$, as the solution to the constrained optimisation problem over the space of all valid Kirkwood-Dirac frames $\mathcal{F}_{KD}^{stab}$ constructed from the operational context of the stabiliser subtheory:
\begin{equation}
    \mathcal{N}(p) = \min_{f\in \mathcal{F}_{KD}^{stab}} \Omega(p, f_{KD}(U)),
\end{equation}
The transition point $p_{KD}$ is then:
\begin{equation}
    p_{KD} = \inf \{ p \in [0,1] \mid \mathcal{N}(p) = 0 \}.
\end{equation}

\begin{theorem}
For the depolarised magic state $\rho_M(p) = (1-p)\rho_m + p \frac{\id}{d}$ added to the odd-dimensional stabiliser subtheory $\mathcal{S}_d$, the critical noise threshold for non-contextuality under Kirkwood-Dirac frames ($p_{KD}$) is bounded from above by the non-negativity threshold of the discrete Wigner function ($p_W$); that is, $p_{KD} \leq p_W$.
\end{theorem}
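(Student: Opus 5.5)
The plan is to show that the set over which $p_{KD}$ is an infimum contains the set over which $p_W$ is an infimum. Concretely, for every $p$ such that the discrete (Gross) Wigner function of $\rho_M(p)$ is entirely non-negative, we will exhibit some $f \in \mathcal{F}_{KD}^{stab}$ with $\Omega(p,f)=0$. Then $\mathcal{N}(p)=0$ at that $p$, and taking infima gives $p_{KD}\le p_W$. Before this, we check that both threshold sets are up-closed in $p$. The representation is affine in $p$, namely $(1-p)\mathrm{Tr}(F_\lambda\rho_m)+\frac{p}{d}\mathrm{Tr}(F_\lambda)$, and $\id/d$ is positively represented. So once non-negativity holds at some $p$, it persists for all larger $p$, and both infima are genuine thresholds.

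The key step is the construction. Above $p_W$, the state $\rho_M(p)$ has a non-negative Gross representation. Gross's representation is also non-negative and diagram-preserving on all of $\mathcal{S}_d$ by part 1 of Theorem~\ref{theorem: necessity of Gross negativity}. Hence the whole scenario $\mathcal{S}_d\cup\{\rho_M(p)\}$ admits a positive quasi-probabilistic model, i.e.\ $\Omega(p,f_W)=0$. By the converse direction of the structure theorem (Corollary 3.5 of Ref.~\cite{Schmid2024structuretheorem}), the scenario is then noncontextual, equivalently $\rho_M(p)$ lies in the stabiliser polytope. By Theorem~\ref{theorem: complex_structure}, any exact frame realising this positive model is admissible in the complexified framework.

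What remains is to place such a frame inside $\mathcal{F}_{KD}^{stab}$. We will take the KD frames built from the operational context of $\mathcal{S}_d$ to be those obtained from pairs of stabiliser (Weyl) eigenbases, closed under Clifford relabelling $f_{KD}(U)$ and under the linear recombinations permitted by Theorem~\ref{theorem: complex_structure}. We then argue that the Gross phase-point frame is reachable from this family. The Gross phase-point operators are Clifford-covariant combinations of stabiliser projectors, and the KD frame built from the $Z$ and $X$ eigenbases spans the same operator space. Composing with an invertible relabelling $\chi$ therefore carries the KD frame to the Wigner frame. If the definition of $\mathcal{F}_{KD}^{stab}$ allows this, then $\min_f\Omega(p,f)\le\Omega(p,f_W)=0$.

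The main obstacle is exactly this inclusion: whether the family $\mathcal{F}_{KD}^{stab}$, as defined, really contains a frame that is as good as the Gross frame. Uniqueness in part 1 of Theorem~\ref{theorem: necessity of Gross negativity} means that any positive diagram-preserving representation of $\mathcal{S}_d$ coincides with Gross's. So the inclusion holds only if $\mathcal{F}_{KD}^{stab}$ is understood broadly enough to contain the Gross frame. A narrower reading, with only bare two-basis KD frames, would not give it. To handle this I would first try to make the definition of $\mathcal{F}_{KD}^{stab}$ explicit as "all exact frames generated from stabiliser bases and Clifford unitaries". Then I would verify directly that the $d^2$ phase-point operators $A_{u}=\frac{1}{d}\sum_{v}\omega^{[u,v]}W_v$ are realised in it. That realisation would settle the inclusion and hence the inequality.
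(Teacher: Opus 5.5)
Your load-bearing step, $f_W\in\mathcal{F}_{KD}^{stab}$, is a genuine gap, and it cannot be closed under the paper's definition of that family. In the paper a frame in $\mathcal{F}_{KD}^{stab}$ is fixed by two orthonormal (stabiliser) bases, $F_{i,i'}=\langle a'_{i'}|a_i\rangle\,|a'_{i'}\rangle\langle a_i|$. For two distinct, mutually unbiased bases every such operator is rank one and non-Hermitian. The Gross phase-point operators $A_u$, by contrast, are Hermitian displaced parities of full rank. A relabelling of $\Lambda$ only permutes the $F_\lambda$, so it cannot produce the $A_u$. A general invertible $\chi$ on $\mathbb{C}^{\Lambda}$ does carry the KD frame to the Wigner frame, but it carries any exact frame to any other. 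Admitting such ``linear recombinations'' therefore makes $\mathcal{F}_{KD}^{stab}$ all of $\mathcal{F}$. Combined with the uniqueness result you quote (any frame with $\Omega(p,f)=0$ is positive on $\mathcal{S}_d$, hence is Gross's), this gives $p_{KD}=p_{crit}=p_W$, a tautology with no Kirkwood--Dirac content. So the direct check of the $A_u$ planned in your last step will fail. Separately, your ``equivalently $\rho_M(p)$ lies in the stabiliser polytope'' is false in general: in odd $d$ the Wigner-non-negative states strictly contain the stabiliser polytope because of bound magic states. That remark is not load-bearing for you.

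The paper never needs the Gross frame. For $p\ge p_W$ it decomposes $\rho_M(p)$ into pure stabiliser states and takes $f_{stab}$ from two mutually unbiased stabiliser bases. It then asserts that every pure stabiliser state has a real non-negative KD distribution in $f_{stab}$, and gets $\Omega(p_W,f_{stab})=0$ by convex-linearity. Do not simply switch to that route, because your own uniqueness remark is what breaks it. Stabiliser states are tomographically complete, so any frame giving real values on all of them has Hermitian $F_\lambda$. No KD frame built from two bases with non-vanishing overlaps has that property. Concretely, for $d=3$ with $\ket{\tilde k}=3^{-1/2}\sum_j\omega^{jk}\ket{j}$, the $Z^2X$-eigenstate $\ket{\psi}=(\ket{0}+\omega\ket{1}+\omega\ket{2})/\sqrt{3}$ has KD entry $\langle 0|\psi\rangle\langle\psi|\tilde 0\rangle\langle\tilde 0|0\rangle=-i\sqrt{3}/9$ in the $Z$/$X$ frame. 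The paper's first step also equates Wigner positivity with the stabiliser polytope, which fails for the bound-magic reason above. With $\Omega$ ranging over a tomographically complete set of processes of $\mathcal{S}_d$, as the paper defines it, no two-basis KD frame ever achieves $\Omega(p,f)=0$. The set whose infimum defines $p_{KD}$ is then empty, and neither argument establishes the inequality as literally stated. What you can honestly prove is the conditional claim that if $\mathcal{F}_{KD}^{stab}$ is enlarged to contain $f_W$, then $p_{KD}=p_W$. Present that enlargement explicitly as a change of definition, not as a lemma still to be verified.
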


\begin{proof}
The boundary of Wigner positivity strictly coincides with the boundary of the stabiliser polytope $\mathcal{P}_{stab}$. 

Therefore, for any $p \geq p_W$, $\rho_M(p) \in \mathcal{P}_{stab}$. By definition of the polytope, $\rho_M(p)$ can be expressed as a classical, convex mixture of pure stabiliser states $\{ |s_k\rangle \}$:
\begin{equation}
    \rho_M(p) = \sum_k c_k |s_k\rangle \langle s_k|, \quad c_k \geq 0, \quad \sum_k c_k = 1
\end{equation}

Now, consider the space of valid Kirkwood-Dirac frames for the stabiliser subtheory, $\mathcal{F}_{KD}^{stab}$. A frame $f \in \mathcal{F}_{KD}^{stab}$ is defined by a choice of two orthonormal bases. Let us construct a specific Kirkwood-Dirac frame $f_{stab}$ by choosing two mutually unbiased bases of two of the stabiliser operators.

Because pure stabiliser states $|s_k\rangle$ are eigenstates of the operators that define these bases, the Kirkwood-Dirac distribution of any pure stabiliser state $|s_k\rangle$ evaluated in the frame $f_{stab}$ is known to be strictly real and non-negative. Therefore, the complex non-classicality penalty for these basis states is zero:
\begin{equation}
    \mathcal{W}\left( \hat{\xi}_{\mathbb{C}}^{(f_{stab})}(|s_k\rangle\langle s_k|) \right) = 0 \quad \forall k
\end{equation}

By \textbf{Theorem~\ref{theorem: complex_structure}}, the map $\hat{\xi}_{\mathbb{C}}$ is convex-linear. Consequently, the representation of the mixed state $\rho_M(p)$ in the frame $f_{stab}$ is the convex combination of the representations of the pure stabiliser states:
\begin{equation}
    \hat{\xi}_{\mathbb{C}}^{(f_{stab})}(\rho_M(p)) = \sum_k c_k \hat{\xi}_{\mathbb{C}}^{(f_{stab})}(|s_k\rangle \langle s_k|)
\end{equation}

Because $c_k \geq 0$ and the pure state representations are strictly non-negative reals, their convex combination must also be strictly non-negative and real. Thus, at $p = p_W$, the penalty function evaluated at this specific Kirkwood-Dirac frame is zero:
\begin{equation}
    \Omega(p_W, f_{stab}) = 0
\end{equation}

Since $f_{stab} \in \mathcal{F}_{KD}^{stab}$, the global minimum over the entire Kirkwood-Dirac frame space cannot exceed the value at $f_{stab}$:
\begin{equation}
    \min_{f \in \mathcal{F}_{KD}^{stab}} \Omega(p_W, f) \leq \Omega(p_W, f_{stab}) = 0
\end{equation}

Because the penalty function is strictly non-negative, the minimum is exactly zero. This implies that $p_W$ satisfies the condition to be in the set of non-contextual noise parameters for Kirkwood-Dirac frames:
\begin{equation}
    p_W \in \left\{ p \in [0,1] \mid \min_{f \in \mathcal{F}_{KD}^{stab}} \Omega(p, f) = 0 \right\}
\end{equation}

Since $p_{KD}$ is defined as the infimum of this set, it follows strictly that:
\begin{equation}
    p_{KD} \leq p_W
\end{equation}
\end{proof}

This result shows that there could exist a contextuality witnessing gap between the point in which the Kirkwood-Dirac representation becomes completely real and positive and the point in which the Wigner function becomes completely real and positive. The gap between those points corresponds to the difference in the two representations' abilities to witness transitions of $\rho_M$ from contextual to noncontextual behaviour.
In this sense then we can say that Kirkwood-Dirac distribution in this case is a better discriminator to understand the critical threshold between contextuality and noncontextuality of the magic state. 

Ref.~\cite{Schmid2022StabSub} ends by saying:

\begin{quote}
    \textit{``Perhaps the most important open question that remains is whether an analogous sufficiency result} [our Theorem~\ref{theorem: necessity of Gross negativity}] \textit{holds for mixed states and generic quantum channels."}
\end{quote}

Our work above allows us to answer in the negative: such an analogous sufficiency result does not hold for mixed states and generic quantum channels.

\section{Conclusions}
\label{sec: conslusions}
In this paper we examined the transition between Spekkens contextuality and noncontextuality for a system represented by a magic state and an odd-dimensional stabiliser subtheory of quantum mechanics. We showed that it is possible to find the exact point in which this transition occurs, using the complexification we provided of the structure theorem of Ref.~\cite{Schmid2024structuretheorem} and the similar result obtained by Ref.~\cite{wagner2025structuretheoremcomplexvaluedquasiprobability}. 
Moreover we showed that not all quasiprobability representations of the mentioned state are equally effective in spotting the transition between contextuality and noncontextuality.
In particular we showed that Kirkwood-Dirac is more effective in this sense, meaning that, as we apply our depolarising channel to the system, the Kirkwood-Dirac representation becomes completely real and positive before the Wigner representation, when they both represent a odd dimensional magic state undergoing decoherence.

This work presents a path to understand the role of decoherence when acting on a magic state. Since Spekkens contextuality has been shown to be a necessary (albeit not sufficient) condition for quantum advantage, it becomes extremely important to identify the conditions under which it is and isn't preserved. Decoherence can destroy this property and this work shows how to calculate when this happens for a odd dimensional magic state. 

Future work will focus on developing techniques to simulate exact frames, to find numerical values for $p_{crit}$ for given magic states in specific (odd) $d$-dimensional Hilbert spaces. 

\textit{Acknowledgements -} JRH acknowledges support from a Royal Society Research Grant (RG/R1/251590), from an EPSRC Mathematical Sciences Small Grant (UKRI3647), and from their EPSRC Quantum Technologies Career Acceleration Fellowship (UKRI1217).

\bibliographystyle{unsrturl}
\bibliography{ref.bib}

\appendix
\begin{appendices}

\section{semifunctors extension of structure theorem}
\label{semifunctors extension of structure theorem}
In this Appendix we provide a brief summary of how the structure theorem can be generalised, from functors to semifunctors. A semifunctor is a functor which does not necessarily preserve identity. To generalise the structure theorem for this type of maps we define the concepts of surjective corestriction of a map. For every map $f:X\rightarrow Y$, its surjective corestriction $\underline f :X \rightarrow \text{Im}(f)$ such that $\underline f (x):=f(x)\ \forall x \in X$. Authors of Ref.~\cite{wagner2025structuretheoremcomplexvaluedquasiprobability} proved the following theorem.
\begin{theorem}
    Let \textbf{G} be a tomographically local finite dimensional generalised probabilistic theory. Any linearity preserving and
empirically adequate semifunctor $N : \textbf{G}\rightarrow\textbf{Vect}_\mathbb{C}$ can be represented as 
\begin{equation}
    M(T)=\chi_B\circ T \circ \phi_A 
\end{equation}
where, for each system $A$, $\chi_A$ is an injective $\mathbb{C}$\-linear
map in $\textbf{Vect}_\mathbb{C}$ uniquely determined by the action of N on
states $s \in \textbf{G}(\mathbb{R},A)$ and $\phi_A = \underline \chi_A^{-1} \circ \text{N}(\id _A)$, where $\underline\chi_A^{-1}$ is the surjective corestriction of $\chi_A$.
\end{theorem}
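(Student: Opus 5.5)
The plan is to build $\chi_A$ directly from the action of $N$ on states, show that it is injective using empirical adequacy, and then use semifunctoriality together with $N(\id_A)$ to recover the stated form for arbitrary transformations. Throughout, $N$ plays the role of $M$ in the displayed equation.

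\textbf{Step 1 (defining $\chi_A$).} For each system $A$ I would first define $\chi_A$ on states by $\chi_A(s):=N(s)$ for $s\in\textbf{G}(\mathbb{R},A)$. Linearity preservation says $N$ respects real linear combinations of processes. So this assignment extends uniquely to a real-linear map on $\text{span}_\mathbb{R}$ of the states, which is all of $A$ in a finite-dimensional GPT. It then extends uniquely to a $\mathbb{C}$-linear map $\chi_A:A_\mathbb{C}\to V_A$ by complexification, $\chi_A(v+iw)=N(v)+iN(w)$. Uniqueness of $\chi_A$ is automatic, since it is fixed on a spanning set.

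\textbf{Step 2 (injectivity of $\chi_A$).} Suppose $\chi_A(v)=0$ for some $v\in A_\mathbb{C}$, and write $v=v_1+iv_2$ with $v_1,v_2$ in the real span of states. For every effect $e$, semifunctoriality and empirical adequacy give
\[
0=N(e)\circ\chi_A(v)=N(e\circ v_1)+iN(e\circ v_2)=e(v_1)+ie(v_2).
\]
The GPT effects separate points of $A$, so $v_1=v_2=0$. Hence $\chi_A$ is injective and its surjective corestriction $\underline\chi_A:A_\mathbb{C}\to\text{Im}(\chi_A)$ is invertible.

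\textbf{Step 3 (action on transformations).} Take $T:A\to B$. For any state $s$, semifunctoriality gives $N(T)\circ N(s)=N(T\circ s)=\chi_B(T s)$. So on $\text{Im}(\chi_A)$ we have $N(T)=\chi_B\circ T\circ\underline\chi_A^{-1}$. Since a semifunctor still preserves composition, $N(T)=N(T\circ\id_A)=N(T)\circ N(\id_A)$. If $\text{Im}\,N(\id_A)\subseteq\text{Im}(\chi_A)$, then
\[
N(T)=\chi_B\circ T\circ\underline\chi_A^{-1}\circ N(\id_A)=\chi_B\circ T\circ\phi_A .
\]

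\textbf{Step 4 (main obstacle: the image of $N(\id_A)$).} The hard step is showing $\text{Im}\,N(\id_A)\subseteq\text{Im}(\chi_A)$, because $N(\id_A)$ need not be the identity. I would use tomographic locality and finite dimensionality to write $\id_A=\sum_{ij} r_{ij}\, s_i\circ e_j$, with fiducial states $s_i$, effects $e_j$ and real coefficients $r_{ij}$. Linearity preservation then gives
\[
N(\id_A)=\sum_{ij} r_{ij}\, N(s_i)\circ N(e_j),
\]
so every output of $N(\id_A)$ lies in $\text{span}\{N(s_i)\}=\text{Im}(\chi_A)$. The care needed is in checking that linearity preservation really covers these non-convex real combinations, and that each $N(e_j)$ is a functional so that the composite makes sense.
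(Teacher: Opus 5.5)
Your argument is correct, and it supplies something the paper does not: the paper gives no proof of this statement, importing it from Ref.~\cite{wagner2025structuretheoremcomplexvaluedquasiprobability} and adding only that uniqueness of $\chi_A$ ``comes directly from'' Theorem~\ref{theorem: complex_structure}. The nearest argument in the paper is the proof of Theorem~\ref{theorem: complex_structure}. That proof expands $\tilde{T}$ over fiducial states and effects, reads off $\chi$ from states and $\phi_A$ from effects via $M(\tilde{E}_j)=\tilde{E}_j\circ\phi_A$, and gets $\phi_A\circ\chi_A=\id$ from empirical adequacy. It then needs $M(\id_A)=\id$ for the right inverse.

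You keep the first ingredients: your Step 2 is the injectivity version of the left-inverse relation. You replace identity preservation by the factorisation $N(T)=N(T)\circ N(\id_A)$ together with $\mathrm{Im}\,N(\id_A)\subseteq\mathrm{Im}\,\chi_A$, which is exactly what survives for a semifunctor. Your uniqueness argument ($\chi_A$ is fixed on a spanning set) is also the right one. The paper's appeal to Theorem~\ref{theorem: complex_structure} does not literally apply here, since that theorem assumes identity preservation.

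Two points are worth stating explicitly. First, you use $N(\mathbb{R})\cong\mathbb{C}$, forced by empirical adequacy on the scalar $1=\id_{\mathbb{R}}$; this is what makes $N(s)$ a vector and $N(e_j)$ a functional. Second, $\id_A=\sum_{ij}r_{ij}\,s_i\circ e_j$ needs only finite dimensionality with states spanning $A$ and effects spanning $A^*$. Tomographic locality plays no role in this single-system statement; it would matter only for relating $\chi_{A\otimes B}$ to $\chi_A\otimes\chi_B$.

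As a bonus, $N(\id_A)$ is idempotent, and $\chi_A(s)=N(\id_A)\circ N(s)$ gives the reverse inclusion $\mathrm{Im}\,\chi_A\subseteq\mathrm{Im}\,N(\id_A)$. So $N(\id_A)$ is a projection onto $\mathrm{Im}\,\chi_A$ and $\phi_A\circ\chi_A=\id_{A_{\mathbb{C}}}$. This recovers Theorem~\ref{theorem: complex_structure} when $N(\id_A)=\id$.
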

It is important to notice that the maps $\chi_A$ and $\phi_A$ are uniquely defined. The map $\chi_A$ is uniquely (up to a isomorphism) defined by the action of the semifunctor. This comes directly from theorem \ref{theorem: complex_structure}. The main distinction between semifunctor case and functor case is given by the mappings $A \mapsto \chi_A$. In the functor case $\chi_A$ are invertible $\mathbb{C}$-linear maps $\chi_A:\mathbb{C}(A)\rightarrow \text{M}(A)$. This implies that $\chi_A$ is surjective and $\text{M}(A)$ is finite-dimensional~\cite{wagner2025structuretheoremcomplexvaluedquasiprobability}. For these reasons $\chi_A$ is a map in \textbf{FVect}$_\mathbb{C}$ for every system $A$. In the semifunctor case instead $\chi_A$ is only injective, which implies that $\chi_A$ can be surjective with N($A$) infinite-dimensional. This implies that in the semifunctor case $\chi_A$ are maps in \textbf{Vect}$_\mathbb{C}$ and not in \textbf{FVect}$_\mathbb{C}$, where \textbf{Vect}$_\mathbb{C}$ is defined as the set of $\mathbb{C}$-linear maps between complex vector spaces.

\end{appendices}

\end{document}